\DeclareMathOperator{\Exp}{\mathbb{E}}
\newtheorem{theorem}{Theorem}
\newtheorem{lemma}[theorem]{Lemma}
\begin{document}

\title{On the Age of Status Updates in Unreliable Multi-Source M/G/1 Queueing Systems}

\author{Muthukrishnan\,Senthil\,Kumar,~\IEEEmembership{Member,~IEEE},
        Aresh\,Dadlani,~\IEEEmembership{Senior Member,~IEEE},
        Masoumeh\,Moradian,
        Ahmad\,Khonsari,
        Theodoros\,A.\,Tsiftsis,~\IEEEmembership{Senior Member,~IEEE}
\thanks{This work was supported by the National Natural Science Foundation of China under Grant 62071202.}
\thanks{Muthukrishnan Senthil Kumar is with the Department of Applied Mathematics and Computational Sciences, PSG College of Technology, Coimbatore 641004, India (e-mail: msk.amcs@psgtech.ac.in).}%
\thanks{Aresh Dadlani is with the Faculty of Arts - Interdisciplinary Studies,~University of Alberta, Edmonton, AB T6G 2E7, Canada, and also with the~Department of Electrical and Computer Engineering, Nazarbayev University, Astana 010000,~Kazakhstan\,(e-mail:\,dadlani@ieee.org).}
\thanks{Masoumeh Moradian is with the School of Computer Science, Institute for Research in Fundamental Sciences, Tehran, Iran (e-mail: mmoradian@ipm.ir).}
\thanks{Ahmad Khonsari is with the Department of Electrical and Computer~Engineering, University of Tehran, and also with the Institute for Research in Fundamental Sciences, Tehran, Iran (e-mail: a\_khonsari@ut.ac.ir).}
\thanks{Theodoros\,A.\,Tsiftsis is with the School of Intelligent Systems Science and Engineering, Jinan University, Zhuhai Campus, Zhuhai 519070, China (e-mail: theo\_tsiftsis@jnu.edu.cn).}}%

\maketitle

\vspace{-0.5em}
\begin{abstract}
\fontdimen2\font=0.52ex
The timeliness of status message delivery in communications networks is subjective to time-varying wireless channel transmissions. In this paper, we investigate the age of information (AoI) of~each source in a multi-source M/G/1 queueing update system with~active server failures. In particular, we adopt the method of supplementary~variables to derive a closed-form expression for the average AoI in terms of system parameters, where the server repair time follows a general distribution and the service time of packets generated by independent sources is a general random variable. Numerical results are provided to validate the effectiveness of the proposed packet serving policy under different parametric settings.\vspace{-1.2em}
\end{abstract}

\begin{IEEEkeywords}
Age of information, multi-source M/G/1 queueing model, server failure, supplementary variable, real-time systems.\vspace{-0.6em}
\end{IEEEkeywords}

\section{Introduction}
\label{intro}
\fontdimen2\font=0.50ex
\IEEEPARstart{A}{ge} of Information (AoI) has emerged as an instrumental performance metric to quantify information freshness at~the receiver in status update systems facilitated by advances in 6G wireless networks. Unlike network delay, the measured value~of a monitored communication process is included in a status~update packet, together with a timestamp indicating the moment the~data was generated at the source. In upcoming cyber-physical control systems, AoI is essentially used to fully characterize latency and is related to the time elapsed since the latest received update was generated at any given time \cite{Yates2021, Andrea2021, Moradian2021, Miridakis2022}. 

Various queueing models have recently been adopted to~investigate AoI and other age-related metrics in low-latency~communication systems with shared medium and random user behavior. The average AoI (AAoI) for a single-source~first-come, first-served (FCFS) M/M/1 queue was first analyzed in \cite{Kaul2012},~and then extended to a multi-source setup in \cite{Yates2019}. To address stochastic~update service time distributions, the authors of \cite{Najm2018} derived the AAoI in closed form for an M/G/1/1~preemptive queue with multiple streams. Furthermore, three~approximate AAoI expressions for a multi-source FCFS M/G/1 queue were derived in \cite{Moltafet2020}. In \cite{Moltafet2021},~the authors used the stochastic hybrid systems paradigm to~derive the moment generating function (MGF) of AoI under the self-preemptive and non-preemptive packet management policies. In \cite{Elmagid2022}, the distributional properties of AoI in terms of the MGF for several energy-harvesting queueing disciplines were presented.

In remote health monitoring and underwater sensor networks where multiple transmitters send status updates over wireless links, the impact of service disruptions on information freshness remains largely unexplored. The AAoI and the peak AoI for a Ber/G/1 vacation queue was first studied in \cite{Tripathi2019}. More recently, the M/G/1 vacation queue was studied under different scheduling schemes \cite{Xu2022}. Apart from~vacations, random server breakdown (with repair) is another form of interruption inherent~in emerging wireless networks with unreliable medium. Nevertheless, to our best knowledge, no queueing-theoretic assessment on status age under active server breakdowns and repairs has yet been reported.

Aiming to fill this research gap, the main contributions of~this letter are: (i) derivation of the steady-state distribution and the probability generating function (pgf) of AoI in a single-source M/G/1 queue with unreliable wireless transmission using the supplementary variable technique; (ii) closed-form AoI expression for the multi-source M/G/1 queue with active server breakdowns and general repair time; (iii) numerical evaluation of the AAoI with respect to packet arrival and service failure rates.
\vspace{-0.6em}
\begin{figure}[!t]
	\centering
	\includegraphics[width=0.4\textwidth]{./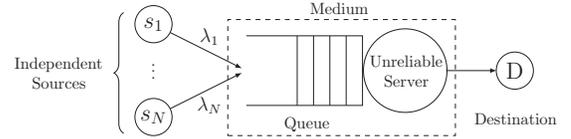}
	\vspace{-0.6em}
	\caption{The unreliable M/G/1 queueing model with multiple sources.}
	\label{fig1}
	\vspace{-0.8em}
\end{figure}


\section{System Model}
\label{sec2}
\fontdimen2\font=0.50ex
In this section, we first describe the M/G/1 queueing model with server failures before formally defining the associated AoI.\vspace{-0.8em}

\subsection{M/G/1 Queueing Model with Unreliable Server}
\label{sec2.1}
\fontdimen2\font=0.50ex
Consider a single-server system in which each independent data source in the set $\mathcal{S} \!=\! \{s_k\}$, where $k \!=\! 1,2,\ldots,N$, generates packets following a Poisson process with rate $\lambda_k \!>\! 0$, which~are then immediately sent to the unreliable server as shown in \figurename~{\ref{fig1}}. The processing (service) time of each packet is an i.i.d. random variable (RV) with distribution function $F_{S_k\!}(t)$,~where $S_k$ denotes the general service time RV of a packet generated~by source $s_k$. The corresponding Laplace-Stieltjes transform (LST) is given as $S_k^*(s) \!=\!\! \int_{0}^{\infty}\! e^{\!-st} \text{d}F_{S_k\!}(t)$ and the $i$-th order moment is $\beta^{(k,i)} \!=\! (-1)^i \frac{\text{d}^i}{\text{d} s^i} S_k^*(s)|_{s \rightarrow 0^+}$. The server, which is prone~to breakdowns while serving packets, has an exponentially distributed lifetime with mean $1/\alpha$. Upon breakdown, the server undergoes a repair process that enables the packet that was being served at the time of breakdown to resume its remaining service. Let $R_k$ denote the general repair time of the server while processing a packet from $s_k$ and $F_{R_k\!}(t)$ be its distribution function. The corresponding LST and $i$-th order moment are $R_k^*(s) \!=\!\! \int_{0}^{\infty}\! e^{-st} \text{d} F_{R_k\!}(t)$~and $\gamma^{(k,i)} \!=\! (-1)^i \frac{\text{d}^i}{\text{d} s^i} R_k^*(s)|_{s \rightarrow 0^+}$,~respectively. Moreover, the arrival process of packets, server idle time, packet service time,~and server repair time are assumed to be mutually independent.\vspace{-1em}

\subsection{Average AoI Definition}
\label{sec2.2}
\fontdimen2\font=0.50ex
As defined in \cite{Moltafet2020}, let $t_{k,i}$ denote the time instant at which the $i$-th status update packet of source $s_k$ was generated, and $t'_{k,i}$ be the instant at which it arrives at the destination $\mathrm{D}$. Thus, the most recent packet of $s_k$ received at time instant $\tau$ can be indexed as:\vspace{-0.3em}
\begin{equation}
	\mathcal{N}_k(\tau) \!=\! \max\{i' | t'_{k,i'} \leq \tau\}.
	\label{eq1}
	\vspace{-0.3em}
\end{equation} 
The timestamp of the recently received packet of $s_k$ is $\mathcal{U}_k(\tau) \!=\! t_{k,\mathcal{N}_k(\tau)}$. Consequently, the AoI of $s_k$ at the destination is~characterized as the random process $\Delta_k(t) \!=\! t - \mathcal{U}_k(t)$. The time~average AoI of $s_k$ at destination $\mathrm{D}$ in interval $(0,\tau)$ is defined as \cite{Moltafet2020}:\vspace{-0.3em}
\begin{equation}
	\Delta_{\tau,k} \!\triangleq\! \frac{1}{\tau}\int_{0}^{\tau} \Delta_k(t) dt,
	\label{eq2}
	\vspace{-0.3em}
\end{equation}
such that the area under $\Delta_k(t)$ is the sum of the disjoint areas $B_{k,i}$, where $i \!=\!1,2,\ldots,\mathcal{N}_k(\tau)$, as shown in \figurename~{\ref{fig2}}. Assuming that the random process $\{B_{k,i}\}$ is mean ergodic, the AAoI of~$s_k$, defined as $\Delta_k \triangleq \lim_{\tau \rightarrow \infty} \Delta_{\tau,k}$, converges to the stochastic average \cite{Moltafet2020}. Thus, we have $\Delta_k \!=\! \lambda_k \Exp[B_{k,i}]$.
\begin{figure}[!t]
	\centering
	\includegraphics[width=0.9\columnwidth]{./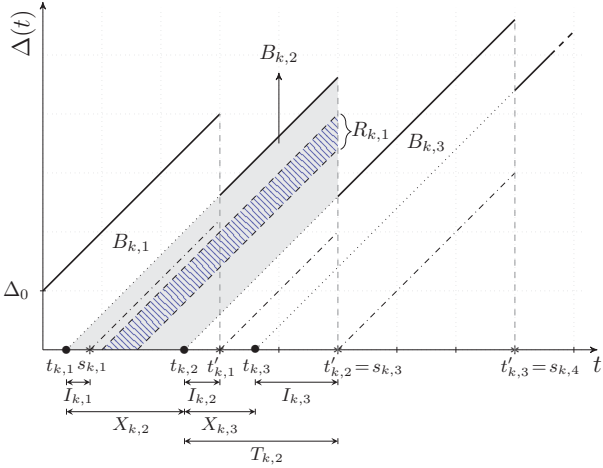}
	\vspace{-0.8em}
	\caption{AoI evolution of $s_k$ packets with respect to time $t$. The blue strip highlights the server repair time while processing packet 1 from $s_k$.}
	\label{fig2}
	\vspace{-0.6em}
\end{figure}

The RV $X_{k,i} \!=\! t_{k,i} \!-\! t_{k,i-1}$ depicted in \figurename{~\ref{fig2}} marks the~$i$-th inter-arrival time of $s_k$, $s_{k,i}$ denotes the instant at which the~transmission of the $(k,i)$-th packet commences, $I_{k,i} \!=\! s_{k,i} \!-\! t_{k,i}$ is the $i$-th idle time of the server for source $s_k$, and the system~sojourn time is denoted by RV $T_{k,i} \!=\! t'_{k,i} \!-\! t_{k,i}$. In a classical queueing model, the transmission of the $(k, i-1)$-th packet completes at the same time as the transmission of the $(k,i)$-th packet begins only if the packet generated at instant $t_{k,i}$ lies within~the interval $(s_{k,i-1},t'_{k,i-1})$. During the transmission period, however, the server is subject to active breakdowns.~The repair process begins as soon as the server fails. In such a case, the packet being~served before the server breakdown resumes its remaining service immediately after the server is repaired.~Hence, $T_{k,i}$ is the~system time of packet $(k,i)$, i.e., the sum of the packet waiting time $(W_{k,i})$ and the service time $(S_{k,i})$ which itself, includes the~possible repair time when the system fails $(R_{k,i})$. Undertaking the approach in \cite{Moltafet2020}, the AAoI of source $s_k$ is determined to be:\vspace{-0.2em}
\begin{align}
	\Delta_k &\!=\! \lambda_k \Exp[B_{k,i}] \!=\! \lambda_k \left(\frac{1}{2} \Exp\left[(X_{k,i} + T_{k,i})^2\right] \!-\! \frac{1}{2} \Exp\left[X^2_{k,i}\right]\right) \nonumber \\
	&\!=\! \lambda_k \left(\frac{\Exp[X^2_{k,i}]}{2} + \Exp\left[X_{k,i} (W_{k,i} + S_{k,i})\right]\right).
	\label{eq3}
	\vspace{-0.5em}
\end{align}

\section{Single-Source Unreliable M/G/1 Queue: Steady-State Distribution}
\label{sec3}
\fontdimen2\font=0.50ex
On breaking down while processing a update packet, the~server in an M/G/1 queue instantly undergoes repair, which has a~generally distributed time. Without loss of generality, we assume~that the single source, $s_1$, generates packets at rate $\lambda_1 \!>\! 0$. We use $S^{\circ}(t)$ and $R^{\circ}(t)$ to denote the remaining packet service and server repair times, respectively, at $t \geq 0$. If $M(t)$ is the total of packets in the queue at time $t$, then the server state can be defined as:\vspace{-0.3em}
\begin{equation}
  C(t) =
  \begin{cases}
    0, & \text{ if the server is \textit{idle}}, \vspace{-0.25em}\\
    1, & \text{ if the server is \textit{serving a packet}}, \vspace{-0.25em}\\
    2, & \text{ if the server \textit{fails while serving a packet}}, 
  \end{cases}
  \label{eq4}
  \vspace{-0.2em}
\end{equation}
For generally distributed service and repair times, the following continuous-time Markov chain captures the system state at $t$:\vspace{-0.3em}
\begin{equation}
	K(t) = \{\left(C(t), M(t), S^{\circ}(t), R^{\circ}(t)\right); t \geq 0\}.
	\label{eq5}
	\vspace{-0.3em}
\end{equation}
Based on \eqref{eq5}, the state probabilities are defined as follows, where $m,x,y \geq 0$:\vspace{-0.3em}
\begin{equation}
  \begin{cases}
    p_0(t) = \Pr[C(t)=0, M(t) = 0], \vspace{-0.1em}\\
    p_1^m(x,t)\,\text{d}x = \Pr[C(t) \!=\! 1, M(t) \!=\! m, x <\! S^{\circ}(t) \!\leq\! x + \text{d}x], \vspace{-0.1em}\\
    p_2^m(x,y,t) \text{d}y = \Pr[C(t) \!=\! 2, M(t) \!=\! m, S^{\circ}(t) = x, \\
     \qquad\qquad\qquad\qquad\qquad\qquad\qquad y < R^{\circ}(t) \leq y + \text{d}y], 
  \end{cases}
  \label{eq6}
  \vspace{-0.2em}
\end{equation}
where $p_0(t)$ denotes the probability that the server is idle at time $t$, $p_1^m(x,t)\,\text{d}x$ refers to the joint probability that the server is busy transmitting a packet during the remaining service time $(x, x+\text{d}x)$ while there are $m$ packets in the queue at time $t$, $p_2^m(x,y,t)\,\text{d}y$ refers to the joint probability that there are $m$ queued packets, the remaining service time is $x$, and the failed server is fixed within the remaining repair time $(y, y \!+\!\text{d}y)$ while serving a packet.

We now apply the method of supplementary variables \cite{kleinrock} to obtain the following balance equations from \eqref{eq6} as $t \rightarrow \infty$:\vspace{-0.3em}
\begin{equation}
  \begin{cases}
    \!\lambda_1 p_0 = p_1^0(0),\vspace{0.3em} \\
    \!\diff{p_1^0(x)}{x} \!=\! (\lambda_1 \!+\! \alpha) p_1^0(x) \!-\! \left(\lambda_1 p_0 \!+\! p_1^1(0)\right)\! f_s(x) \!-\! p_2^0(x,0), \vspace{0.3em}\\
    \!\diff{p_1^{m\!}(x)\!}{x} \!=\! (\!\lambda_{1\!} \!+\! \alpha\!) p_1^{m\!}(x) \!-\!\! \lambda_1 p_1^{m \!-\! 1\!}(x) \!-\! p_1^{m \!+\! 1\!}(0) f_{\!s}(x) \!-\! p_2^{m\!}(x,\!0), \vspace{0.3em}\\
    \!\diffp{p_2^{m\!}(x,y)}{y} \!=\! \lambda_1 p_2^m(x,y) \!-\! \lambda_1 p_2^{m \!-\!1}(x,y) - \alpha p_1^m(x) f_r(y),
  \end{cases}
  \label{eq7}
  \vspace{-0.2em}
\end{equation}
where $f_s(x)$ and $f_r(x)$ are, respectively, the probability density functions of the service and repair times, and the normalization condition is given as:\vspace{-0.3em}
\begin{equation}
	p_0 + \sum_{m=0}^{\infty} \left[\int_{0}^{\infty}\! p_1^m(x)\,\text{d}x + \int_0^{\infty}\!\! \int_0^{\infty}\! p_2^m(x,y)\,\text{d}x\,\text{d}y\right] \!=\! 1.
	\label{eq8}
	\vspace{-0.3em}
\end{equation}

Denoting the LSTs of $p_1^m(x)$ and LST$[p_2^m(x,y)]$ by $\tilde{p}_1^{\,m}(\theta)$~and $\widetilde{\tilde{p}}_2^{\,m}(\theta,s)$, respectively, the marginal pdfs are:\vspace{-0.3em}
\begin{equation}
  \begin{cases}
    \tilde{p}_1(z,\theta) = \sum_{m=0}^{\infty} \tilde{p}_1^{\,m}(\theta) z^m,\vspace{0.3em} \\
    p_1(z,0) = \sum_{m=0}^{\infty} p_1^m(0) z^m, \vspace{0.3em}\\
    \widetilde{\tilde{p}}_2(z,\theta,s) = \sum_{m=0}^{\infty} \widetilde{\tilde{p}}_2^{\,m}(\theta,s) z^m, \vspace{0.3em}\\
    \tilde{p}_2(z,\theta,0) = \sum_{m=0}^{\infty} \tilde{p}_2^{\,m}(\theta,0) z^m,
  \end{cases}
  \label{eq9}
  \vspace{-0.3em}
\end{equation}
By applying LST on both sides of \eqref{eq7}, multiplying $z^m$ to~the resultant equations, summing over $m$, and through some algebraic manipulations, the pgfs of the queue size, $\mathcal{P}(z)$, and the system size, $\mathcal{Q}(z)$, are derived to be as follows, where $p_0 \!=\! 1 \!-\! \lambda_1 \beta^{(1,1)}(1 \!+\! \alpha \gamma^{(1,1)})$ and $\phi(s,z) \!=\! \left(s \!+\! \lambda_1 \!-\! \lambda_1 z \!+\! \alpha \!-\! \alpha R^*(s \!+\! \lambda_1 \!-\! \lambda_1 z)\right)$:\vspace{-0.3em}
\begin{equation}
  \begin{cases}
    \!\!\mathcal{P}(z) \!=\! p_0 \!+\! \tilde{p}_1(z,\!0) \!+\! \widetilde{\tilde{p}}_2(z,\!0,\!0) \!=\! \dfrac{\left(1 \!-\! S^*\!\left(\phi(0,\!z)\right)\right)p_0}{S^*\!\left(\phi(0,\!z)\right) \!-\! z},\vspace{0.3em} \\
    \!\!\mathcal{Q}(z) \!=\! p_0 \!+\! z \tilde{p}_1(z,\!0) \!+\! z \widetilde{\tilde{p}}_2(z,\!0,\!0) \!=\! \dfrac{S^*\!\left(\phi(0,\!z)\right)\!(1 \!-\! z)p_0}{S^*\!\left(\phi(0,\!z)\right) \!-\! z},
  \end{cases}
  \label{eq10}
  \vspace{-0.2em}
\end{equation}

To obtain the LST of sojourn time distribution of the system, we substitute $a \!=\! \lambda_1(1 -z)$ in $\mathcal{Q}(z)$ given in \eqref{eq10}:\vspace{-0.2em}
\begin{equation}
	W^*(a) = \dfrac{a \left(S^*\!\left(a + \alpha \left(1 - R^*(a)\right)\right)\right) p_0}{\left(a - \lambda_1\left(1 - S^*(a + \alpha \left(1 - R^*(a)\right))\right)\right)}.
	\label{eq11}
	\vspace{-0.2em}
\end{equation}

The system availability at time $t$, denoted by $A(t)$, determines the likelihood of when the unreliable server will be available for use. In steady-state, i.e., $P_a = \lim_{t\to\infty} A(t)$, we get:\vspace{-0.1em}
\begin{equation}
	P_a = \lim_{z\to 1} \{p_0 + \tilde{p}_1(z,0)\} \!=\! 1 - \lambda_1\beta^{(1,1)} \alpha \gamma^{(1,1)}.
	\label{eq11b}
	\vspace{-0.4em}
\end{equation}

\section{Multi-Source Unreliable M/G/1 Queue: AoI Analysis}
\label{sec4}
\fontdimen2\font=0.50ex
In this section, we derive the AAoI defined in \eqref{eq3} for the~unreliable M/G/1 queueing model with multiple sources. We begin our analysis of \eqref{eq3} by considering sources $s_1$ and $s_2$. For the first term in \eqref{eq3}, we have $\Exp[X^2_{1,i}] \!=\! 2/\lambda_1^2$ since the inter-arrival time of~$s_1$ follows an exponential distribution with $\lambda_1 \!>\! 0$. The second term in \eqref{eq3} can be expressed as follows:\vspace{-0.3em}
\begin{equation}
	\Exp[X_{1,i} (W_{1,i} + S_{1,i})] = \Exp[X_{1,i} W_{1,i}] + \Exp[X_{1,i}] \Exp[S_{1,i}].
	\label{eq12}
	\vspace{-0.2em}
\end{equation}
In \eqref{eq12}, the second term arises as the inter-arrival time and service time (including the repair time) are independent of each other. However, evaluation of the first term in \eqref{eq12} is computationally complex since the inter-arrival and waiting times are dependent. To find $\Exp[X_{\!1,i}W_{1,i}]$, we express $W_{1,i}$ in terms of two events, namely $A^b_{1,i}$ and $A^l_{1,i}$. The former represents the event when the inter-arrival time of packet $(1,i)$ is shorter than the system time of packet $(1, i \!-\! 1)$, whereas the latter is the complementary event. Under event $A^b_{1,i}$, the waiting time of packet $(1,i)$ comprises~of (i) the remaining time to complete processing packet $(1, i \!-\! 1)$, (ii) the sum of service times of $s_2$ packets that arrived during $X_{1,i}$ and must be served before packet $(1,i)$ according to the FCFS rule, and (iii) the sum of possible server repair times before serving packet $(1,i)$. Similarly, for $A^l_{1,i}$, the waiting time of packet $(1,i)$ includes (i) the remaining service time of $s_2$ packet under service at the instant packet $(1,i)$ arrives, (ii) the sum of service times of $s_2$ packets served prior to packet $(1,i)$ in FCFS order, and (iii) the total repair time before processing the $(1,i)$-th packet.

For event $A^b_{1,i}$, let $R^b_{1,i} \!=\! T_{1,i-1} \!-\! X_{1,i}$ be the residual~system time to complete serving packet $(1,i \!-\! 1)$. Also, let the total~service times of $s_2$ packets arriving during $X_{1,i}$ that need service before packet $(1,i)$ be $S^b_{1,i\!} \!=\! \!\sum_{i' \in \Omega^b_{2,i}}\!\! S_{2,i'}$, where $\Omega^b_{2,i}$ is the index set of queued packets from $s_2$. Let $F^b_{1,i} \!=\! \sum_{i' \in \Omega^b_{2,i}}\!\! R_{2,i'}$ be the total server repair time before serving the~$(1,i)$-th packet. In the same manner, we define $S^l_{1,i} \!=\! \sum_{i' \in \Omega^l_{2,i}}\!\! S_{2,i'}$ and $F^l_{1,i} \!=\! \sum_{i' \in \Omega^l_{2,i}}\!\! R_{2,i'}$ for event $A^l_{1,i}$, where $\Omega^l_{2,i}$ is the set of indices of queued packets generated by source $s_2$. Using these definitions, the waiting time of packet $(1,i)$ can be written as:\vspace{-0.3em}
\begin{equation}
  W_{1,i} =
  \begin{cases}
    S^b_{1,i} + F^b_{1,i} + R^b_{1,i}, & \text{ if }A^b_{1,i}\text{ occurs}, \vspace{-0.05em}\\
    S^l_{1,i} + F^l_{1,i} + R^l_{2,i}, & \text{ if }A^l_{1,i}\text{ occurs},
  \end{cases}
  \label{eq13}
  \vspace{-0.3em}
\end{equation}
where $R^l_{2,i}$ denotes the residual service time RV of the $s_2$ packet under service at the arrival of packet $(1,i)$ conditioned on event $A^l_{1,i}$. By taking the conditional expectation on \eqref{eq12}, we get:\vspace{-0.4em}
\begin{align}
	\Exp[X_{\!1,i} W_{\!1,i}] \!=\! &\left(\Exp[R^b_{1,i} X_{\!1,i}|A^b_{1,i}] \!+\! \Exp[(S^b_{1,i\!} \!+\! F^b_{1,i}) X_{\!1\!,i} | A^b_{1,i}]\right)p^b_{1,i} \nonumber \\
	&+\! \Exp[(S^l_{1,i} \!+\! F^l_{1,i} \!+\! R^l_{2,i}) X_{1,i} | A^l_{1,i}]\, p^l_{1,i},
	\label{eq14}
	\vspace{-0.4em}
\end{align}
where $p^b_{1,i}$ and $p^l_{1,i}$ are the probabilities of events $A^b_{1,i}$ and $A^l_{1,i}$, respectively, and are derived to be:\vspace{-0.4em}
\begin{equation}
    \begin{cases}
		p^b_{1,i} = 1 - p^l_{1,i}, \vspace{0.15em}\\
    	p^l_{1,i} = \dfrac{S^*\big(\lambda_1 \!+\! \alpha \left(1 - R^*(\lambda_1)\right)\!\big) \lambda_1 p_0}{\big(\lambda_1 \!-\! \lambda \left(1 \!-\! S^*\!\left(\lambda_1 \!+\! \alpha \left(1 - R^*(\lambda_1)\right)\right)\right)\!\big)},
  \end{cases}
  \label{eq15}
  \vspace{-0.4em}
\end{equation}
where $\lambda \!=\! \sum_k \lambda_k$, and the service and sojourn times of all packets are stochastically identical, i.e., $\forall i$ and $k \!=\! 1,2,\ldots,N$, we have $S_{k,i} \!=\! S$ and $T_{k,i} \!=\! T$. For the sake of presentation, we substitute $G^b_{1,i} \!=\! S^b_{1,i} \!+\! F^b_{1,i}$ and $G^l_{1,i} \!=\! S^l_{1,i} \!+\! F^l_{1,i}$ in \eqref{eq14}. From probability theory, the conditional probability density function of RVs $X_{1,i}$ and $T_{1,i-1}$, given the event $A^b_{1,i}$, can be expressed as:\vspace{-0.4em}
\begin{equation}
  f_{X_{1,i}, T_{1,i-1}|A^b_{1,i}\!}(s,t) \!=\!
  \begin{cases}
    \!\dfrac{\lambda_1 e^{-\lambda_1 s} f_{T_{1,i-1}}(t)}{p^b_{1,i}}, & \text{ if }s \leq t, \vspace{0.03em}\\
    \!0, & \text{ if }s > t.
  \end{cases}
  \label{eq16}
  \vspace{-0.3em}
\end{equation}
Using \eqref{eq16}, we derive the three conditional expressions of \eqref{eq14} in Lemmas~\ref{lemma1}-\ref{lemma3}, where $\forall k$, we assume $\rho_k \!=\! \lambda_k \beta^{(1,1)}(1 \!+\! \alpha \gamma^{(1,1)})$.
\begin{lemma}
	\label{lemma1}
	The closed-form expression for the first conditional expectation in \eqref{eq14} is given as:\vspace{-0.4em}
\begin{equation}
	 \Exp[R^b_{1\!,i} X_{\!1\!,i}|A^b_{1\!,i}] \!=\!\! \dfrac{1}{\lambda_1 p^b_{1,i}\!\!} \!\!\left(\!\Exp[W\!] \!+\! \Exp[S] \!-\!\! W^{*'\!}\!(\!\lambda_{1\!}) \!+\! \dfrac{2 W^{*\!}(\lambda_{1\!}) \!-\!\! 2}{\lambda_1}\!\right)\!.   
  	\label{eq17}
  	\vspace{-0.4em}
\end{equation}
\end{lemma}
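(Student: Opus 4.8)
The plan is to write $R^b_{1,i} = T_{1,i-1} - X_{1,i}$ and evaluate the conditional expectation directly against the joint density in \eqref{eq16}. Since $A^b_{1,i}$ is precisely the event $\{X_{1,i} \le T_{1,i-1}\}$, the target reduces to a double integral over the region $\{s \le t\}$,
\[
\Exp[R^b_{1,i} X_{1,i} \,|\, A^b_{1,i}] = \frac{1}{p^b_{1,i}}\int_0^\infty f_T(t)\left(\int_0^t (t-s)\,s\,\lambda_1 e^{-\lambda_1 s}\,\text{d}s\right)\text{d}t,
\]
where I use that $X_{1,i}$ is exponential with rate $\lambda_1$ and independent of $T_{1,i-1}$ prior to conditioning, and that all sojourn times share the common law of $T$ with density $f_T$ (so $f_{T_{1,i-1}} \equiv f_T$).

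First I would carry out the inner integral over $s$, which splits into the standard incomplete-gamma integrals $\int_0^t s\,\lambda_1 e^{-\lambda_1 s}\,\text{d}s$ and $\int_0^t s^2\,\lambda_1 e^{-\lambda_1 s}\,\text{d}s$. After integration by parts and collecting terms, the $t^2 e^{-\lambda_1 t}$ contributions cancel and the inner integral reduces to the compact form
\[
\int_0^t (t-s)\,s\,\lambda_1 e^{-\lambda_1 s}\,\text{d}s = \frac{t}{\lambda_1} - \frac{2}{\lambda_1^2} + \left(\frac{t}{\lambda_1} + \frac{2}{\lambda_1^2}\right)e^{-\lambda_1 t}.
\]

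The key step is then to integrate this expression against $f_T(t)$ over $(0,\infty)$ and recognize the Laplace--Stieltjes structure: since $T$ has LST $W^*$ from \eqref{eq11}, the exponentially weighted terms satisfy $\int_0^\infty e^{-\lambda_1 t} f_T(t)\,\text{d}t = W^*(\lambda_1)$ and $\int_0^\infty t\,e^{-\lambda_1 t} f_T(t)\,\text{d}t = -W^{*'}(\lambda_1)$, whereas the unweighted terms give $\int_0^\infty f_T(t)\,\text{d}t = 1$ and $\int_0^\infty t\,f_T(t)\,\text{d}t = \Exp[T]$. Substituting these and using $\Exp[T] = \Exp[W] + \Exp[S]$ produces $\frac{1}{\lambda_1}\big(\Exp[W] + \Exp[S] - W^{*'}(\lambda_1) + (2W^*(\lambda_1)-2)/\lambda_1\big)$ inside the integral, and dividing by $p^b_{1,i}$ gives exactly \eqref{eq17}.

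I expect the main obstacle to be purely bookkeeping in the inner integral --- keeping the polynomial-times-exponential terms organized so that the $t^2 e^{-\lambda_1 t}$ pieces cancel cleanly --- rather than any conceptual hurdle. The single genuine insight is to read the $t$-integrals not as ordinary moments of $T$ but as evaluations of $W^*$ and its derivative $W^{*'}$ at $\lambda_1$, which is what collapses the double integral into a closed form expressed entirely through quantities already obtained in the single-source analysis of Section~\ref{sec3}.
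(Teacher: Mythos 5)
Your proposal is correct and takes essentially the same route as the paper's Appendix~\ref{appA}: you substitute $R^b_{1,i} = T_{1,i-1} - X_{1,i}$, integrate against the conditional density \eqref{eq16} using the independence of $X_{1,i}$ and $T_{1,i-1}$, and recognize the $t$-integrals as evaluations of $W^*(\lambda_1)$ and $-W^{*'}(\lambda_1)$, exactly as the paper does. The only (cosmetic) difference is bookkeeping order: you integrate the combined kernel $(t-s)s$ in $s$ first, so the $t^2 e^{-\lambda_1 t}$ pieces --- and hence any $W^{*''}(\lambda_1)$ terms --- cancel before the $t$-integration, whereas the paper splits the target into $\Exp[T_{1,i-1}X_{1,i}|A^b_{1,i}] - \Exp[X^2_{1,i}|A^b_{1,i}]$ via \eqref{eq23} and lets the $W^{*''}(\lambda_1)$ contributions of \eqref{eq24} and \eqref{eq25} cancel only at the final combination.
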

\begin{proof}
	See Appendix~\ref{appA}.\vspace{-0.3em}
\end{proof}
\begin{lemma}
	\label{lemma2}
	The closed-form expression for the second conditional expectation in \eqref{eq14} is given as:\vspace{-0.4em}
\begin{equation}
	 \Exp[G^b_{1\!,i} X_{\!1\!,i}|A^b_{1\!,i}] \!=\!\! \dfrac{\rho_2}{p^b_{1,i}\!\!} \!\left(\!\!\dfrac{2}{\lambda_1^2} \!-\! W^{*''\!\!}(\lambda_{1\!}) \!+\!\! \dfrac{2 W^{*'\!}(\lambda_{1\!})}{\lambda_1} \!+\! \dfrac{2 W^{*\!}(\!\lambda_{1\!})}{\lambda_1^2}\!\right)  \!. 
  	\label{eq18}
  	\vspace{-0.4em}
\end{equation}
\end{lemma}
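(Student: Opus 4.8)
The plan is to use that $G^b_{1,i}=S^b_{1,i}+F^b_{1,i}$ enters the target expectation only through its conditional mean, so the full law of this compound sum is never needed. First I would condition on $X_{1,i}=s$: the packets indexed by $\Omega^b_{2,i}$ are exactly the $s_2$ arrivals in an interval of length $s$, whose number is Poisson with mean $\lambda_2 s$, and each carries an effective service requirement $S_{2,i'}+R_{2,i'}$ of mean $\beta^{(1,1)}+\alpha\beta^{(1,1)}\gamma^{(1,1)}=\beta^{(1,1)}(1+\alpha\gamma^{(1,1)})$ (the service time plus the expected total repair time $\alpha\beta^{(1,1)}\gamma^{(1,1)}$ accrued over the $\mathrm{Poisson}(\alpha\beta^{(1,1)})$ breakdowns during that packet's service, using that all sources are stochastically identical). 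By Wald's identity (the compound-Poisson mean) this gives $\Exp[G^b_{1,i}|X_{1,i}=s]=\lambda_2 s\,\beta^{(1,1)}(1+\alpha\gamma^{(1,1)})=\rho_2 s$.

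The crucial reduction is that this conditional mean is unaffected by the event $A^b_{1,i}=\{X_{1,i}<T_{1,i-1}\}$. Under FCFS the sojourn time $T_{1,i-1}$ depends only on the work present at $t_{1,i-1}$ and on the service of packet $(1,i-1)$, hence is independent of the $s_2$ arrivals occurring in $(t_{1,i-1},t_{1,i}]$; so $G^b_{1,i}$ is conditionally independent of $T_{1,i-1}$ given $X_{1,i}$, and therefore $\Exp[G^b_{1,i}|X_{1,i}=s,A^b_{1,i}]=\rho_2 s$ as well. The tower rule then collapses the product into a single second moment, $\Exp[G^b_{1,i}X_{1,i}|A^b_{1,i}]=\rho_2\,\Exp[X^2_{1,i}|A^b_{1,i}]$.

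It then remains to evaluate $\Exp[X^2_{1,i}|A^b_{1,i}]$ from the joint law \eqref{eq16}. Integrating that density over $t$ yields the marginal $f_{X_{1,i}|A^b_{1,i}}(s)=\lambda_1 e^{-\lambda_1 s}\bar{F}_T(s)/p^b_{1,i}$, where $\bar{F}_T(s)=1-F_T(s)$ is the complementary sojourn-time CDF, so that $\Exp[X^2_{1,i}|A^b_{1,i}]=\frac{\lambda_1}{p^b_{1,i}}\int_0^\infty s^2 e^{-\lambda_1 s}\bar{F}_T(s)\,\mathrm{d}s$. Swapping the order of integration and evaluating the inner incomplete integral $\int_0^t s^2 e^{-\lambda_1 s}\,\mathrm{d}s$ expresses the result through $\int_0^\infty f_T(t)\,\mathrm{d}t=1$ and $\int_0^\infty t^n e^{-\lambda_1 t}f_T(t)\,\mathrm{d}t$ for $n=0,1,2$, which equal $W^*(\lambda_1)$, $-W^{*'}(\lambda_1)$, and $W^{*''}(\lambda_1)$ respectively since $W^*$ in \eqref{eq11} is the LST of $T$. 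Regrouping and multiplying by $\rho_2$ produces the displayed closed form, with the $p^b_{1,i}$ in the denominator as given in \eqref{eq15}.

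I expect the only genuine obstacle to be the independence claim of the second paragraph: one must verify that conditioning on $A^b_{1,i}$, which is a statement about $T_{1,i-1}$, biases neither the count nor the sizes of the $s_2$ packets accumulated during $X_{1,i}$. Everything after that point is the same LST/incomplete-integral bookkeeping that already yielded Lemma~\ref{lemma1}, and running it here reproduces the terms $-W^{*''}(\lambda_1)$, $2W^{*'}(\lambda_1)/\lambda_1$, and $2/\lambda_1^2$ directly; I would double-check the sign carried by the remaining $W^*(\lambda_1)/\lambda_1^2$ term, since it is exactly the quantity that combines with $\Exp[T_{1,i-1}X_{1,i}|A^b_{1,i}]$ to reproduce the $\tfrac{2W^*(\lambda_1)-2}{\lambda_1}$ contribution of Lemma~\ref{lemma1}.
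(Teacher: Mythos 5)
Your proposal follows the paper's own proof in Appendix~\ref{appB} essentially step for step: the paper likewise conditions on $X_{1,i}$, invokes the independence of $\Omega^b_{2,i}$ from $T_{1,i-1}$ to replace the conditional mean of $G^b_{1,i}$ by $\rho_2 x$ (passing from \eqref{eq26} to \eqref{eq27}), and then evaluates $\frac{\rho_2}{p^b_{1,i}}\,\lambda_1\!\int_0^\infty x^2 e^{-\lambda_1 x}\bigl(1-F_{T_{1,i-1}}(x)\bigr)\text{d}x$ by the same LST bookkeeping; your Wald/compound-Poisson justification of $\Exp[G^b_{1,i}\mid X_{1,i}=x]=\rho_2 x$ and your FCFS argument for the conditional independence are precisely the details the paper leaves implicit.

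One substantive remark: the sign you flagged at the end is indeed off in the printed statement, not in your derivation. Exchanging the order of integration gives
\begin{equation*}
\lambda_1\!\int_0^\infty\! x^2 e^{-\lambda_1 x}\bigl(1-F_T(x)\bigr)\,\text{d}x \;=\; \frac{2}{\lambda_1^2}-W^{*''}(\lambda_1)+\frac{2W^{*'}(\lambda_1)}{\lambda_1}-\frac{2W^{*}(\lambda_1)}{\lambda_1^2},
\end{equation*}
so the last term enters with a \emph{minus} sign, not the plus sign displayed in \eqref{eq18}. A sanity check with deterministic sojourn time $T\equiv c$ (so $W^*(s)=e^{-sc}$) confirms the minus sign, and so does the paper's own downstream algebra: recombining Lemma~\ref{lemma1}, Lemma~\ref{lemma2}, and Lemma~\ref{lemma3} into the AAoI reproduces the coefficient $2(1-\rho_2)W^*(\lambda_1)/\lambda_1$ of \eqref{eq20} only if the term is $-2W^*(\lambda_1)/\lambda_1^2$; with the printed plus sign one would obtain $2(1+\rho_2)W^*(\lambda_1)/\lambda_1$ instead. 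Your argument is therefore correct and in fact exposes a typographical sign error in the lemma as stated.
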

\begin{proof}
	See Appendix~\ref{appB}.\vspace{-0.3em}
\end{proof}
\begin{lemma}
	\label{lemma3}
	The closed-form expression for the third conditional expectation in \eqref{eq14} is given as:\vspace{-0.4em}
\begin{equation}
	 \Exp[\left(G^l_{1\!,i} \!+\! R^l_{2,i}\right) X_{\!1\!,i}|A^l_{1\!,i}] = \dfrac{\rho_2}{p^l_{1,i}} \!\left(\! W^{*''\!\!}(\lambda_{1}) \!-\!\! \dfrac{W^{*'\!}(\lambda_{1})}{\lambda_1}\!\right)\!. 
  	\label{eq19}
  	\vspace{-0.3em}
\end{equation}
\end{lemma}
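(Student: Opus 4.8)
The plan is to mirror the computational strategy used for Lemmas~\ref{lemma1} and \ref{lemma2}, but now to integrate over the complementary region $s > t$ that corresponds to the ``late-arrival'' event $A^l_{1,i}$, i.e. $X_{1,i} > T_{1,i-1}$. First I would write down the conditional joint density of $X_{1,i}$ and $T_{1,i-1}$ given $A^l_{1,i}$, which is the $s>t$ counterpart of \eqref{eq16}, namely $\lambda_1 e^{-\lambda_1 s} f_{T_{1,i-1}}(t)/p^l_{1,i}$ on $\{s>t\}$ and zero otherwise. Since under $A^l_{1,i}$ the waiting time is $W_{1,i} = G^l_{1,i} + R^l_{2,i}$, the target expectation becomes the double integral $\frac{1}{p^l_{1,i}}\int_0^\infty \int_t^\infty \Exp[W_{1,i}\mid T_{1,i-1}=t,\, X_{1,i}=s]\, s\, \lambda_1 e^{-\lambda_1 s}\, f_{T_{1,i-1}}(t)\, ds\, dt$.

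The crux---and the step I expect to be the main obstacle---is evaluating the inner conditional mean $\Exp[G^l_{1,i} + R^l_{2,i}\mid T_{1,i-1}=t,\, X_{1,i}=s]$. The claim I would establish is that it equals $\rho_2\, t$, \emph{independent} of $s$. The argument: under $A^l_{1,i}$ the previous source-$s_1$ packet has already departed, so by the FCFS rule packet $(1,i)$ waits only for the source-$s_2$ work accumulated while packet $(1,i-1)$ was in the system, i.e. the $s_2$ packets arriving over the interval of length $T_{1,i-1}=t$. Because $s_2$ arrivals are Poisson at rate $\lambda_2$ and each contributes effective work (service plus the repairs triggered during that service) of mean $\beta^{(1,1)}(1+\alpha\gamma^{(1,1)})$, the expected total waited-for work is $\lambda_2\beta^{(1,1)}(1+\alpha\gamma^{(1,1)})\,t = \rho_2 t$. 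Justifying that this mean depends on $t$ but not on the residual gap $s-t$---that is, that the repairs and the residual-service term $R^l_{2,i}$ combine cleanly into the effective-load factor $\rho_2$---is the delicate part, and it is precisely where the server-breakdown structure enters through the factor $1+\alpha\gamma^{(1,1)}$.

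Once the inner mean is reduced to $\rho_2 t$, the remaining work is routine integration. I would first carry out the integral over $s$ using $\int_t^\infty s\,\lambda_1 e^{-\lambda_1 s}\,ds = (t + 1/\lambda_1)e^{-\lambda_1 t}$, so that the expression collapses to $\frac{\rho_2}{p^l_{1,i}}\int_0^\infty (t^2 + t/\lambda_1)\, e^{-\lambda_1 t}\, f_{T_{1,i-1}}(t)\, dt$. Finally I would recognize the tilted moments of the sojourn time in terms of the LST $W^*$ from \eqref{eq11}: since $W^*(a)=\Exp[e^{-aT}]$ gives $W^{*''}(\lambda_1)=\int_0^\infty t^2 e^{-\lambda_1 t} f_{T}(t)\,dt$ and $-W^{*'}(\lambda_1)=\int_0^\infty t\, e^{-\lambda_1 t} f_{T}(t)\,dt$, the integral is exactly $\frac{\rho_2}{p^l_{1,i}}\big(W^{*''}(\lambda_1) - W^{*'}(\lambda_1)/\lambda_1\big)$, which is \eqref{eq19}. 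A useful consistency check along the way is that the same effective-load constant $\rho_2$ and the same tilted moments $W^{*'}(\lambda_1)$ and $W^{*''}(\lambda_1)$ already appear in Lemma~\ref{lemma2}, so the $A^b$ and $A^l$ contributions should assemble coherently when substituted back into \eqref{eq14}.
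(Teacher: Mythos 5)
Your proposal is correct and follows essentially the same route as the paper's Appendix~\ref{appC}: the conditional joint density on $\{s>t\}$ (the complement of \eqref{eq16}), the reduction of the inner conditional mean to $\rho_2 t$ independent of $s$, the integration $\int_t^\infty s\,\lambda_1 e^{-\lambda_1 s}\,ds = (t+1/\lambda_1)e^{-\lambda_1 t}$, and the identification of the resulting tilted moments with $W^{*''}(\lambda_1)$ and $-W^{*'}(\lambda_1)$ all mirror \eqref{eq28} exactly. If anything, you make explicit (via the effective-load argument with factor $\beta^{(1,1)}(1+\alpha\gamma^{(1,1)})$) the key step $\Exp[G^l_{1,i}+R^l_{2,i}\mid T_{1,i-1}=t, X_{1,i}=s]=\rho_2 t$, which the paper simply asserts when it pulls $\rho_2\,t$ into the double integral without comment.
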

\begin{proof}
	See Appendix~\ref{appC}.\vspace{-0.3em}
\end{proof}
\begin{figure}[!t]
	\centering
	{\label{fig3b}\includegraphics[width=0.518\columnwidth]{./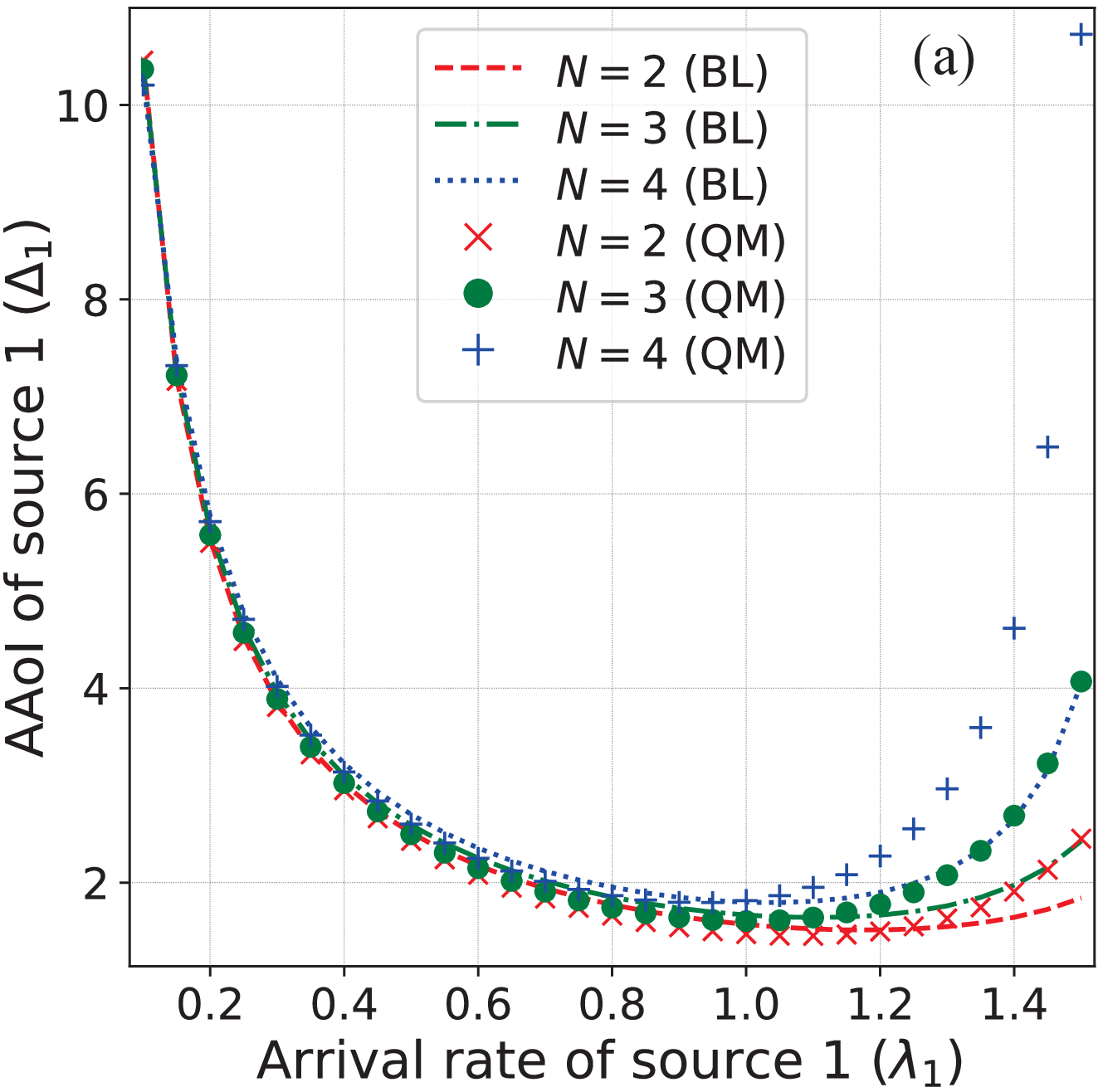}}
	{\label{fig3c}\includegraphics[width=0.463\columnwidth]{./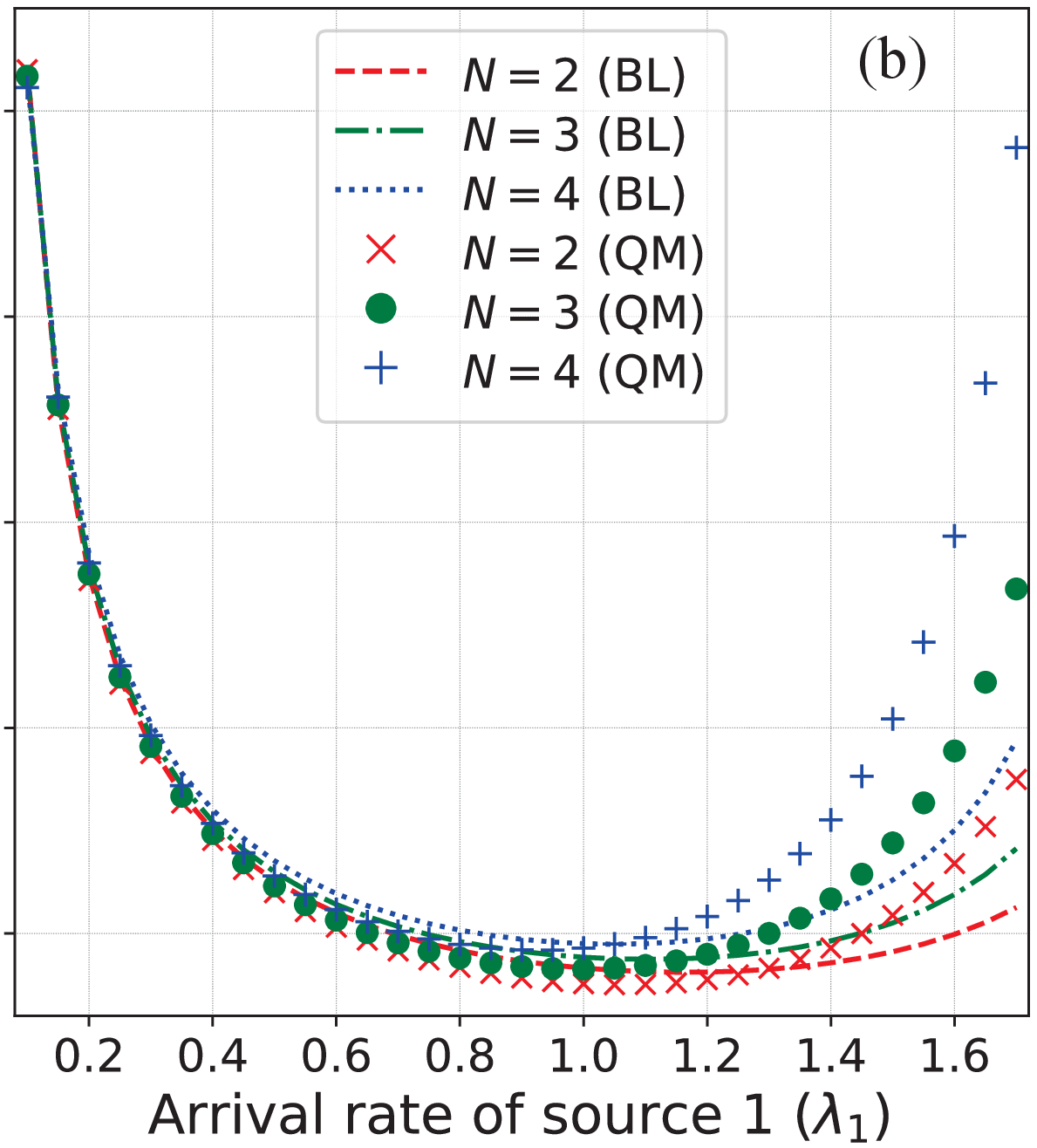}}
\vspace{-0.7em}
	\caption{The AAoI of source $s_1$ in terms of $\lambda_1$ for $N \!=\! \{2,3,4\}$ under (a) Erlang-$2$ and (b) hyper-exponential service time distributions with $p \!=\! 0.7$. Here, $\Exp[S]=0.5$, $\Exp[R]=0.3$, $\alpha = 0.1$, and $\lambda_i (i \neq 1) \!=\! 0.12$.}
	\vspace{-0.4em}
	\label{fig3}
\end{figure}

From \eqref{eq17}-\eqref{eq19}, the AAoI of source $s_1$ in a two-source~unreliable M/G/1 queueing model can be computed as follows:\vspace{-0.3em}
\begin{equation}
	\Delta_{1\!} \!=\! \Exp[W] + 2\!\Exp[S] + \dfrac{2 \rho_{2\!} \!-\!\! 1}{\lambda_1} + \dfrac{2(\!1 \!\!-\!\! \rho_{2}\!)W^{*\!}(\lambda_{1\!})}{\lambda_1} + (\rho_2 -\! 1\!) W^{*'\!\!}(\lambda_1\!).
	\label{eq20}
	\vspace{-0.3em}
\end{equation}

In general, the AAoI of source $s_1$ in a multi-source M/G/1 model with server breakdowns can be calculated by replacing $\rho_2$ in \eqref{eq20} with $\sum_{j \in \mathcal{S}/\{1\}} \rho_j$ as follows:\vspace{-0.5em}
\begin{align}
	\Delta_{1} = &\Exp[W] + 2\Exp[S] + \dfrac{2 W^*(\lambda_1)}{\lambda_1} - W^{*'}(\lambda_1) - \dfrac{1}{\lambda_1} \nonumber \\
	&+ \sum_{j \in \mathcal{S}/\{1\}} \rho_j \left(\dfrac{2}{\lambda_j} + W^{*'}(\lambda_j) - \dfrac{2 W^*(\lambda_j)}{\lambda_j}\right).
	\label{eq21}
	\vspace{-1.9em}
\end{align}

\section{Numerical Results and Discussions}
\label{sec5}
\fontdimen2\font=0.50ex
In this section, we evaluate the AAoI in a multi-source unreliable M/G/1 queueing model (QM) and compare our analytical findings with the multi-source M/G/1 baseline model~(BL)~studied in \cite{Moltafet2020}. For this purpose, we consider Erlang-2 (Erl.) and hyper-exponential of order 2 ($H_2$) service time distributions.~The probability density function corresponding to the $H_2$ distribution is given as follows, where $\gamma_1 \!=\! 2p/m_1$, $\gamma_2 \!=\! 2(1-p)/m_1$, $p \!=\! (1 \!+ \sqrt{(c^2 \!-\! 1)/(c^2 \!+\! 1)})/2$,  $c^2 \!\geq \! 1$ is the squared coefficient of variation, and $m_1$ is the mean:\vspace{-0.4em}
\begin{align}
	f_{H_2}(t) = p \gamma_1 e^{-\gamma_1 t} + (1 - p) \gamma_2 e^{-\gamma_2 t},\quad t \!\geq\!0.
	\label{eq22}
	\vspace{-0.4em}
\end{align}

\figurename{~\ref{fig3}} shows the AAoI of source $s_1$ $(\Delta_1)$ impacted by~the packet arrival rate of $s_1$ $(\lambda_1)$ under general service time distribution for varying $N$ values. For parameters $\Exp[S] \!=\! 0.5$, $\Exp[R] \!=\! 0.3$, $\alpha \!=\! 0.1$, $p \!=\! 0.7$, and $\lambda_i (i \!\neq\! 1) \!=\! 0.12$, we observe that the waiting time of packets from all sources increases with $\lambda_1$, which in turn, causes $\Delta_1$ to increase as well. Nonetheless, for larger $\lambda_1$ values, QM results in higher $\Delta_1$ as compared to BL. This clearly reveals the impact of average server repair time, which is related to the expected service time as $\Exp[S] \!=\! \beta^{(1,1)}(1 \!+\! \alpha \Exp[R])$. We also see a substantial increase in the AAoI gap between QM and BL as $N$ goes from 2 to 4 under all three service time distributions in Fig.~3a and Fig.~3b. Such behavior is anticipated as the unreliable server becomes overloaded with packets arriving from a larger set of sources, thus further delaying the packet processing time.
\begin{figure}[t]
	\centering
	{\label{fig4b}\includegraphics[width=0.527\columnwidth]{./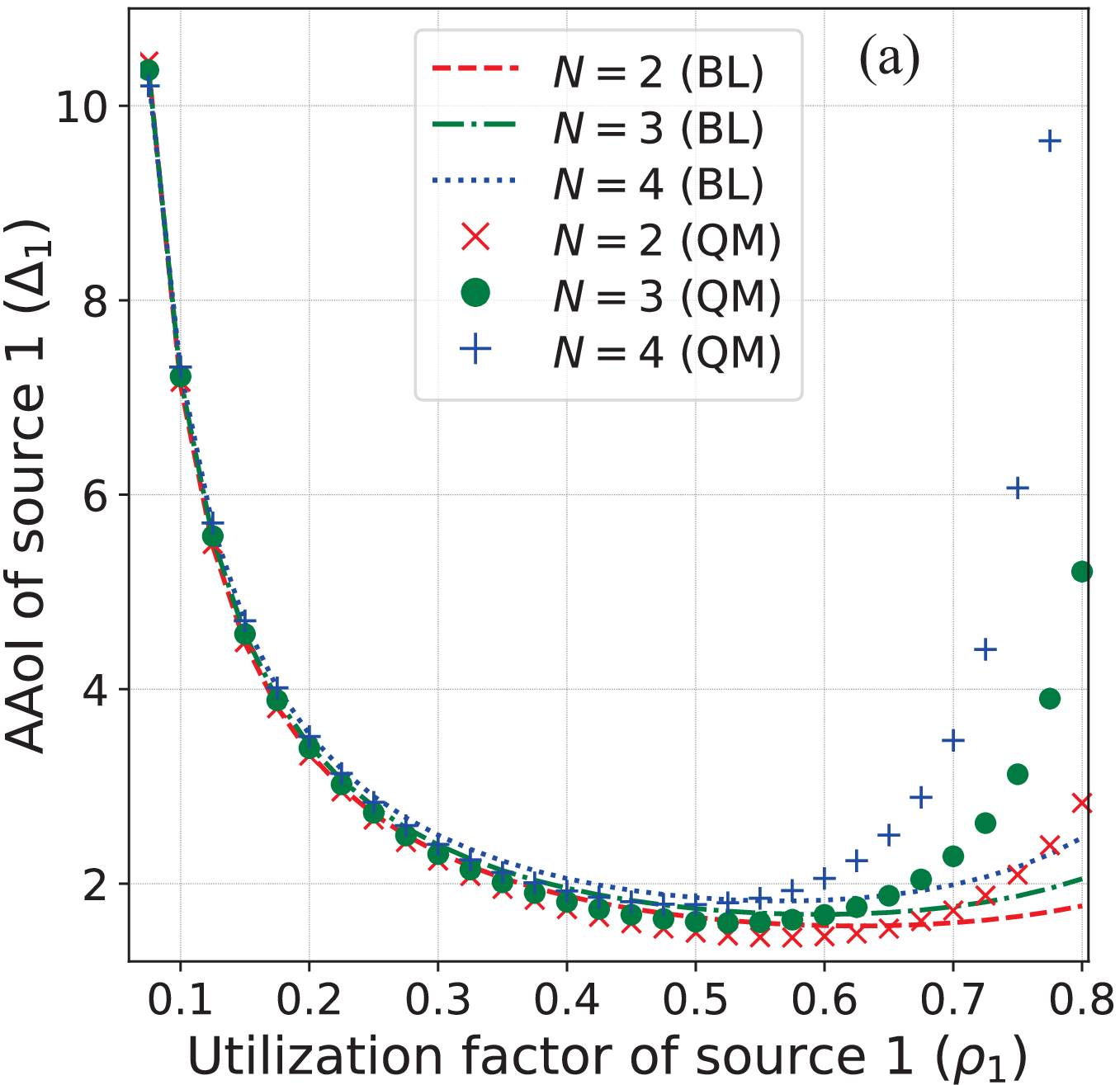}}\!
	{\label{fig4c}\includegraphics[width=0.464\columnwidth]{./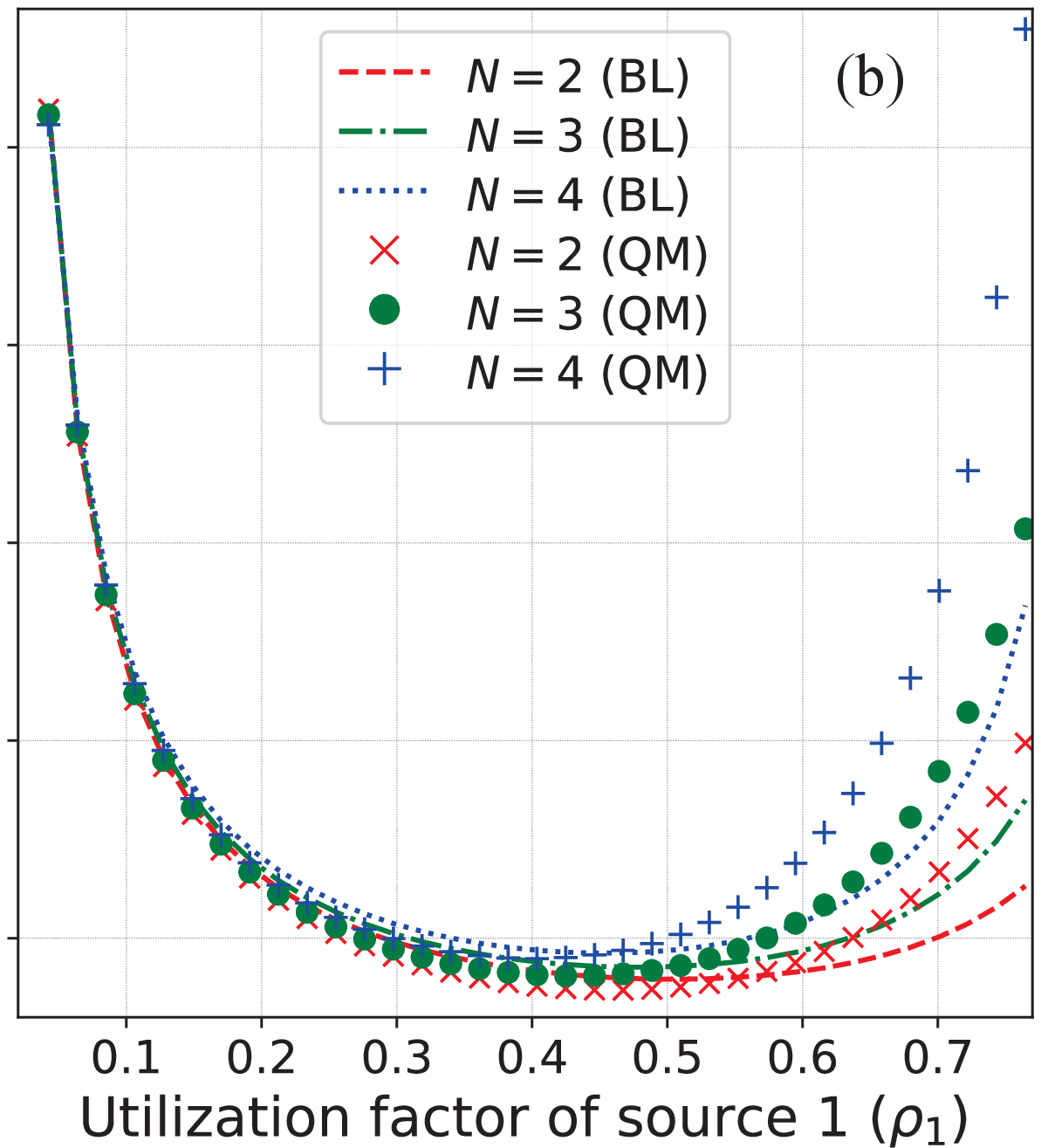}}
\vspace{-0.7em}
	\caption{The AAoI of source $s_1$ in terms of $\rho_1$ for $N \!=\! \{2,3,4\}$ under (a) Erlang-2 and (b) hyper-exponential service time distributions with $p \!=\! 0.7$. Here, $\Exp[S]=0.5$ and $\Exp[R]=0.3$, $\alpha=0.1$, and $\lambda_i (i \neq 1) \!=\! 0.12$.}
	\vspace{-0.6em}
	\label{fig4}
\end{figure}

The effect of the traffic intensity of source 1 $(\rho_1 \!=\! \lambda_1 \Exp[S])$~on $\Delta_1$ is depicted in \figurename{~\ref{fig4}}. For the same parametric settings as~in \figurename{~\ref{fig3}}, we note insignificant difference in the AAoI of $s_1$ under light traffic conditions. However, as more packets arrive at the server from multiple sources, the AAoI of $s_1$ grows non-linearly as evident in Fig.~4a and Fig.~4b. Moreover, as $\rho_1$ increases, $\Delta_1$ in QM rises at a much faster rate than BL. This is mainly because the packets being processed are interrupted by the breakdowns (and repairs) undergone by the server in QM, in spite of the low failure rate of $\alpha \!=\! 0.1$. The relative rise in $\Delta_1$ for different values of $N$ under both distributions in \figurename{~\ref{fig4}} can be clearly justified by the system dynamics illustrated in \figurename{~\ref{fig3}}.
\begin{figure}[!t]
	\centering
	\includegraphics[width=0.35\textwidth]{./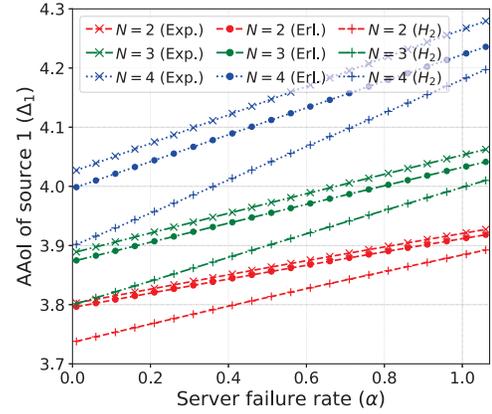}
	\vspace{-0.9em}
	\caption{The AAoI of source $s_1$ in terms of $\alpha$ for $N = \{2,3,4\}$ under various service time distributions. Here, $\Exp[S] \!=\! 0.5$, $\Exp[R] \!=\! 0.3$, $\alpha \!=\! 0.1$, $p \!=\! 0.7$, $\lambda_1 = 0.3$, and $\lambda_i (i \neq 1) \!=\! 0.12$.}
	\label{fig5}
	\vspace{-0.6em}
\end{figure}


\figurename{~\ref{fig5}} plots the AAoI of $s_1$ as a function of the server failure rate $(\alpha)$ for $\lambda_1 \!=\! 0.3$.~By comparing the service time distributions for various $N$ values, we note that the exponential distribution (Exp.), where $p \!=\!1$, yields the highest $\Delta_1$, which is then~followed by the Erlang distribution. As the server~experiences more frequent failures, the AAoI of $s_1$ under all three distributions can be seen to converge. This is merely because most of the service time~is spent repairing the server, leaving lesser time for the packets to be processed. It is also noteworthy that, irrespective of the service time distribution, QM exhibits system instability much earlier in time than BL when all sources generate packets at higher rates.

Finally, \figurename{~\ref{fig6}} showcases the AAoI of $s_1$ in terms of the mean repair time $(\Exp[R])$ and the steady-state system availability $(P_a)$. To corroborate the numerical results, we conduct Monte Carlo (MC) simulations where each data point is averaged over $1000$ runs. In \figurename{~6a}, we see that $\Delta_1$ gradually increases with $\Exp[R]$, which is conforming to the system behavior depicted in \figurename{~\ref{fig5}}. As $N$ grows, the mean repair time also increases because the server fails more frequently thus, causing the AAoI of $s_1$ to rise. As per \eqref{eq11b}, we see that $P_a$ reaches its lowest value of $0.95$ when $\lambda_1 (=0.6)$ is the highest in \figurename{~6b}. This implies that the server is more likely to process the increasing number of packets generated by $s_1$, thus resulting in lower $\Delta_1$ values. However, gradual decrease in $\lambda_1$ causes $P_a$ to rise. At $P_a \!=\! 0.995$, where $\lambda_1 \!=\! 0.06$, packets generated by the other sources are primarily served due to their higher arrival rates $(\lambda_i (i \!\neq\! 1) \!=\! 0.12)$. Such behavior explains the drastic increase of $\Delta_1$ depicted in \figurename{~6b}, despite the server availability being at its peak.\vspace{-0.2em}

\section{Conclusion}
\label{sec6}
\fontdimen2\font=0.50ex
In this paper, we analyzed the average AoI of each source~in a multi-source M/G/1 queueing
model with active server breakdowns and general repair time. Using the supplementary variable technique, we first derived the pgf of AoI and the stationary~distribution in a single-source unreliable M/G/1 queue. Then, the average AoI expression for the unreliable multi-source M/G/1 queue was derived in closed form. The analytical findings were validated for varying packet arrival and server failure rates under general service time distribution via simulation results.~A possible avenue of future work~is AoI analysis of correlated~Poisson arrivals from multiple sources in unreliable queueing models.\vspace{-0.2em}
\begin{figure}[!t]
	\centering
	{\label{fig6a}\includegraphics[width=0.512\columnwidth]{./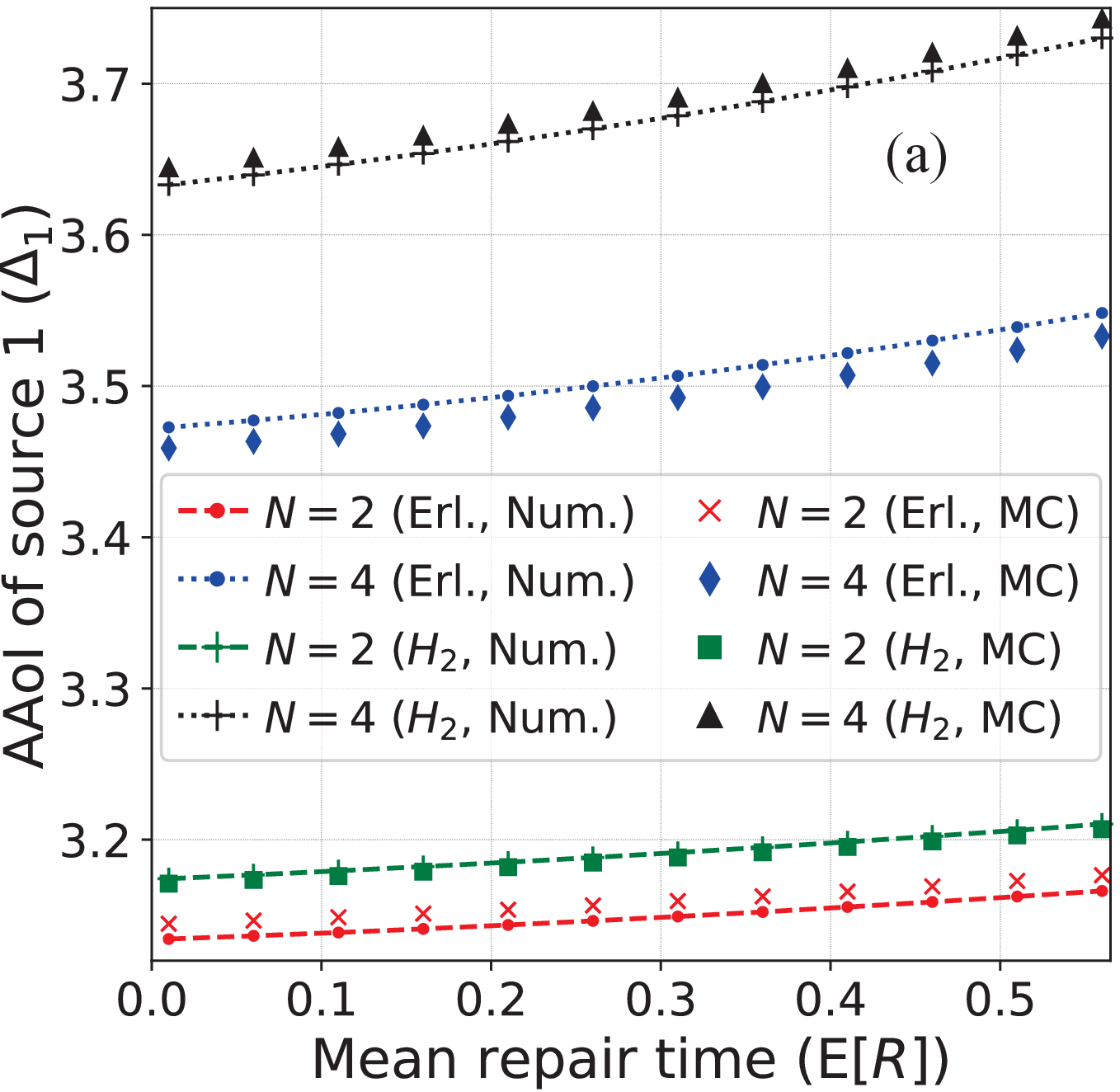}}\!
	{\label{fig6b}\includegraphics[width=0.481\columnwidth]{./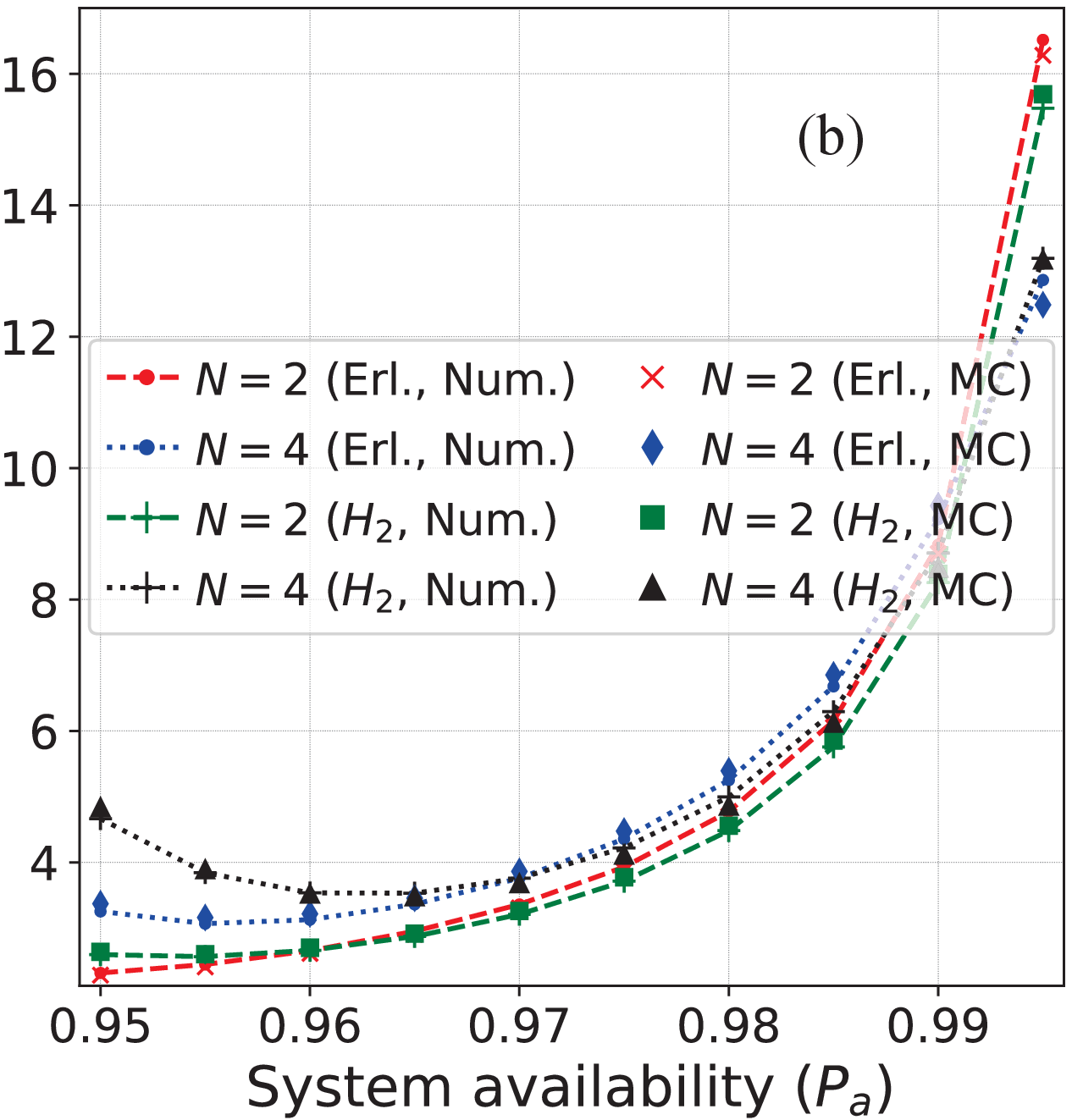}}
\vspace{-1.8em}
	\caption{The AAoI of source $s_1$ in terms of (a) $\Exp[R]$ for $\lambda_1 \!=\! 0.3$ and (b) $P_a$ for $\lambda_1 \!=\! [0.06, 0.6]$ and $\Exp[R]=0.3$. Here, $N \!=\! \{2,4\}$, $p \!=\! 0.7$, $\Exp[S]=0.5$, $\alpha=0.1$, and $\lambda_i (i \neq 1) \!=\! 0.12$.}
	\vspace{-0.5em}
	\label{fig6}
\end{figure}



{\appendices
\section{Proof of Lemma~\ref{lemma1}}
\label{appA}
The first conditional expression in \eqref{eq14} can be written as:\vspace{-0.2em}
\begin{align}
	\Exp[R^b_{1,i} X_{1,i}|A^b_{1,i}] = \Exp[T_{1,i-1} X_{1,i}|A^b_{1,i}] - \Exp[X^2_{1,i}|A^b_{1,i}].
	\label{eq23}
	\vspace{-0.2em}
\end{align}
The first term of \eqref{eq23} reduces to:\vspace{-0.2em}
\begin{align}
	\Exp[&T_{1,i-1} X_{1,i}|A^b_{1,i}] =\! \int_0^{\infty}\!\! \int_0^{\infty}\! x\,t\,f_{X_{1,i}, T_{1, i-1}| A^b_{1,i\!}}(x,t)\,\text{d} x\,\text{d} t \nonumber \\
	&= \dfrac{1}{p^b_{1,i}} \left(\dfrac{\Exp[W]}{\lambda_1} + \dfrac{\Exp[S]}{\lambda_1}- \dfrac{W^{*'\!}(\lambda_1)}{\lambda_1} - W^{*''\!}(\lambda_1)\right),
	\label{eq24}
	\vspace{-0.2em}
\end{align}
where $W^{*'\!}$ and $W^{*''\!}$ are, respectively, the first and second order derivatives of $W^*$ w.r.t. $\lambda_1$. The second term of \eqref{eq23} reduces to:\vspace{-0.2em}
\begin{align}
	\Exp[&X^2_{1,i}|A^b_{1,i}] =\! \int_0^{\infty}\! x^2\,f_{X_{1,i}| A^b_{1,i}}(x)\,\text{d} x \nonumber \\
	&= \dfrac{1}{p^b_{1,i\!}} \!\left(\!\dfrac{2}{\lambda_1^2} \!-\! \dfrac{W^{*''\!}(\lambda_1)}{\lambda_1^2} - \dfrac{2W^{*'\!}(\lambda_1^2)}{\lambda_1^2} + \dfrac{2W^{*\!}(\lambda_1)}{\lambda_1^2}\!\right).
	\label{eq25}
	\vspace{-0.2em}
\end{align}
Adding \eqref{eq24} and \eqref{eq25} gives \eqref{eq17}, which completes the proof.
\section{Proof of Lemma~\ref{lemma2}}
\label{appB}
\vspace{-0.1em}
The second conditional expression in \eqref{eq14} can be written as:\vspace{-0.3em}
\begin{align}
	\Exp[&G^b_{1,i} X_{1,i}|A^b_{1,i}] \nonumber \\ 
	&= \int_0^{\infty}\!\! x \Exp\big[\!\!\!\sum_{i' \in \Omega^b_{2,i}}\! \!\! S_{2,i'\!} \!\mid\! A^b_{1,i}, X_{1,i} \!=\! x] f_{\!X_{1\!,i}|A^b_{1,i\!\!}}(x)\,\text{d}x.
	\label{eq26}
	\vspace{-0.6em}
\end{align}
The set $\Omega^b_{2,i}$ in \eqref{eq26} denotes the packets generated from $s_2$ that must be served before those from $s_1$ and is independent of the inter-arrival time of $s_1$ packets $(T_{1,i-1})$. Thus, we get:\vspace{-0.3em}
\begin{align}
	\Exp[G^b_{1\!,i} X_{1\!,i}|A^b_{1\!,i}] \!=\! \dfrac{\rho_2}{p^b_{1\!,i}\!} \int_{0}^{\infty}\!\! x^2 \lambda_1 e^{\!-\lambda_1 x\!}\left(\!1 \!-\! F_{T_{1\!,i \!-\! 1\!}}(x)\right) \text{d}x ,
	\label{eq27}
	\vspace{-0.6em}
\end{align}
which after some algebraic manipulations and rearrangements, results in \eqref{eq18}. This completes the proof.\vspace{-0.5em}
\section{Proof of Lemma~\ref{lemma3}}
\label{appC}
\vspace{-0.1em}
The third conditional expression in \eqref{eq14} can be reduced as:\vspace{-0.4em}
\begin{align}
	 \Exp&\!\left[\left(G^l_{1\!,i} \!+\! R^l_{2,i}\right) X_{\!1\!,i}|A^l_{1\!,i}\right]  \nonumber \\
	 &\quad= \dfrac{\rho_2}{p^l_{1,i}}\! \int_0^{\infty}\!\! \int_t^{\infty}\!\! x\,t\,\lambda_1 e^{-\lambda_1 x} f_{T_{1, i-1\!}}(t)\,\text{d}x\,\text{d}t \nonumber \\
	 &\quad= \dfrac{\rho_2}{p^l_{1,i}}\! \int_0^{\infty}\!\! \left(\!t^2 e^{\!-\lambda_1 x} f_{T_{1, i - 1\!}}(t) \!+\! \dfrac{t}{\lambda_1} e^{\!-\lambda_1 x} f_{T_{1, i-1\!}}(t)\!\right) \text{d}t,
  	\label{eq28}
  	\vspace{-0.7em}
\end{align}
which after some algebraic manipulations and rearrangements, results in \eqref{eq19}. This completes the proof.\vspace{-0.2em}

\bibliographystyle{IEEEtran}
\bibliography{./IEEEabrv,./myref}

%
%
%
%
%
%
%
%
%

\end{document}